\newcommand\Mycomb[2][^n]{\prescript{#1\mkern-0.5mu}{}C_{#2}}
\newcommand{\be}{\begin{equation}}
\newcommand{\ee}{\end{equation}}
\newcommand{\ba}{\begin{eqnarray}}
\newcommand{\ea}{\end{eqnarray}}
\newcommand{\tr}{\operatorname{Tr}}
\newcommand{\etal}{{\it{et. al. }}}
\newcommand{\cf}{{\it{cf. }}}
\newcommand{\ie}{{\it{i.e.}}}
\newtheorem{theorem}{Theorem}
\newtheorem{lemma}{Lemma}
\def\>{\rangle}
\def\<{\langle}
\begin{document}

%\title{Some results on semi-device-independent random number expansion}
\title{Bounds on %the
semi-device-independent quantum random number expansion capabilities}

\author{Vaisakh Mannalath}  \thanks{vaisakhmannalath@gmail.com}
\affiliation{Jaypee Institute of Information Technology, A-10, Sector 62, Noida UP 201309, India}

\author{Anirban Pathak}  \thanks{anirban.pathak@jiit.ac.in}
\affiliation{Jaypee Institute of Information Technology, A-10, Sector 62, Noida UP 201309, India}
	
\begin{abstract}
 The randomness expansion capabilities of semi-device-independent (SDI) prepare and measure protocols are analyzed under the sole assumption that the Hilbert state dimension is known. It's explicitly proved that the maximum certifiable entropy that can be obtained through this set of protocols is  $-\log_2\left[\frac{1}{2}\left(1+\frac{1}{\sqrt{3}}\right)\right]$ and the same is independent of the dimension witnesses used to certify the protocol. The minimum number of preparation and measurement settings required to achieve this entropy is also proven. An SDI protocol that generates the maximum output entropy with the least amount of input setting
 is provided. An analytical relationship between the entropy generated and the witness value is obtained. It's also established that certifiable entropy can be generated as soon as dimension witness crosses the classical bound, making the protocol noise-robust and useful in practical applications.
\end{abstract}

%\pacs{03.65.Ta,03.65.Ud, 03.67.Dd, 03.67.Hk}
%\keywords{\textclor{red}{Need to be added}}

% 03.65.Ta	Foundations of quantum mechanics;
% 03.67.Dd	Quantum cryptography and communication security
% 03.67.Hk	Quantum communication
%\section{% 03.65.Ud	Entanglement and quantum nonlocality}
\maketitle

\section{Introduction}
 Randomness plays an important role in simulation algorithms \cite{Metropolis1949,Karp1991,10.1145/234313.234327}, cryptography \cite{6769090,1621063,PhysRevA.86.062308}, fundamental
sciences \cite{PhysicsPhysiqueFizika.1.195,Wheeler1978,Shadbolt2014} and much research has been devoted to the generation of random numbers \cite{RevModPhys.89.015004}. Deterministic algorithms can at best create `pseudo-random numbers' that mimic the statistics of `true' random numbers \cite{Ecuyer}. One needs access to unpredictable physical processes in order to generate truly random numbers \cite{RevModPhys.89.015004,Calude2015IndeterminismAR}. Quantum theory provides well-defined theoretical models which are inherently probabilistic and serve us with good entropy sources to extract randomness \cite{PhysRevA.87.062327}. Generating randomness from quantum systems is a matured field \cite{Ma2016}. There are now even commercially available quantum random number generators (QRNGs) \cite{QuantumR93:online,Jacak2021,Huang2021}. These devices are based on methods that are only applicable to their specific experimental setup and corresponding entropy estimates of the output randomness depend on a number of assumptions. Ultimately, these devices require a level of trust in the manufacturer which is not ideal for a number of reasons \cite{RevModPhys.89.015004}.

For the above-mentioned reasons, it is highly advantageous to have a setup that provides certifiable entropy while making minimal assumptions about its working. Device-independent QRNGs (DI-QRNGs) \cite{Colbeck2009,colbeck2011} provide a solution to this problem. By consuming input randomness and using non-locality of quantum theory it can, theoretically, certify the output randomness without characterizing the inner workings of the setup. There has also been numerous experimental demonstrations of this approach \cite{Pironio2010,PhysRevA.87.012336,Liu2018,Liu2021}. However,  protocols for DI-QRNG suffer from practical issues which make them hard to implement outside of a laboratory setup compared to one-way protocols commonly used in commercial devices.

A more practical approach to random number generation is provided by the so-called semi-device-independent QRNGs (SDI-QRNGs) \cite{PhysRevA.84.034301,PhysRevA.85.052308,PhysRevA.100.062338,PhysRevA.94.060301,PhysRevX.6.011020,PhysRevLett.126.210503}. Unlike, DI-QRNG, complete knowledge of a part of the setup used for random number generation is allowed in SDI-QRNGs. Even though this incurs a  weaker form of security compared to the DI counter part, it is much more practical. Realistically, there might be parts of the device that are more error-prone than others. The SDI approach lets you design protocols that can still generate certifiable randomness while leaving such parts uncharacterized \cite{PhysRevLett.114.150501,PhysRevApplied.7.054018,PhysRevApplied.15.034034,Pivoluska2021}. These protocols are also easier to implement since non-local sources are not required and is thus more consumer-friendly. Hence the entropy generation capabilities of the SDI protocols are of particular interest to cryptographers and others who use random numbers for various practical purposes.\\

In this work, we derive a general upper bound on the amount of entropy generated by a class of SDI protocols. Specifically, we consider prepare and measure protocols of two-dimensional systems and two-outcome measurements. Even though various protocols belonging to this class have been studied previously \cite{PhysRevA.84.034301,PhysRevA.85.052308}, their analysis has been restricted to some particular dimension witnesses which are used to distinguish quantum processes from classical processes. Our results, however, are independent of dimensional witnesses. We prove that the maximum amount of entropy which could be generated by any protocol of this class is equal to $-\log_2\left[\frac{1}{2}\left(1+\frac{1}{\sqrt{3}}\right)\right]$. Moreover, the minimum number of preparation and measurement settings to certify this much entropy is also proven. We give an explicit example of a unique protocol that matches these bounds, proving them to be tight.
Furthermore, we derive an analytical relationship between the witness values and the entropy generated with this protocol.\\

 The rest of the paper is organized as follows. In Section \ref{sec:SDI model}, we briefly describe the SDI model and state some definitions that we use in the subsequent sections. Section \ref{sec:results} contains results on the limits of output/input randomness. We report an explicit protocol matching these limits and its subsequent analysis in Section \ref{sec:protocol}. In Section \ref{sec:discussion}, we present a brief discussion along with some relevant open questions for further research.

 \section{Semi-Device Independent Model\label{sec:SDI model}}
 We first illustrate the general structure of the SDI-QRNG protocol that we have considered here. It involves two black boxes shielded from the outside world (see \autoref{blackbox}). One of the devices (boxes) is used for state preparation while the other one is used for the measurement.
 The preparation black box, $A$ has $\mathcal{X}$ settings, and the measurement black box, $B$ has $\mathcal{Y}$ settings; $\mathcal{X}, \mathcal{Y}\geq2$ . Depending on the randomly chosen setting among $\mathcal{X}$, $A$ outputs a quantum system $\rho_x$, $x\in[\mathcal{X}]$ (we use $[N]$ to denote a set of cardinality $N$), which will then be sent to the second black box $B$ for measurement. We assume that the state $\rho_x\in \mathbb{C}^2$; is a two-dimensional system. The measurement device takes $\rho_x$ as input and measures it in one of the randomly chosen settings $\mathcal{Y}$ and outputs $b\in\{0,1\}$. This forms one round of the prepare and measure protocol. We can repeat this procedure multiple times to get a probability distribution given by
 \begin{equation}
  p(b|x,y)=\tr\left(\rho_x M^b_y\right),   
 \end{equation}
 where $M^b_y$ is the measurement operator acting on $\rho_x$ with input parameter $y\in[\mathcal{Y}]$ and output  $b$.\\
 In order to identify whether the probability distributions truly have a quantum origin or not, dimension witnesses of the form
\begin{equation}
 W\equiv\sum_{x,y}w_{x,y}E_{x,y}
 \label{witnessdefinition}
\end{equation}
 are usually used, where $w_{x,y}$ are real coefficients and $$E_{x,y}=P(b=0|x,y).$$ Under such dimension witnesses, an SDI protocol does not demand any restriction on pre-shared classical correlations between the preparation and measurement devices \cite{PhysRevLett.105.230501}. Although we do assume that they don't share any quantum correlations. If we denote by $W_c$
 and $W_Q$ the classical and quantum upper-bounds of the witness value using two-dimensional systems, whenever,
 \begin{equation}
     W_c< W \leq W_Q
     \label{limitofw}
 \end{equation}
  we can be certain that the protocol has no classical description \cite{PhysRevLett.105.230501,PhysRevA.84.034301}. Hence the output $b$ of $B$ is truly probabilistic in nature and can be used to extract randomness \cite{10.1145/502090.502099,Carter1979}.

The entropy in the output $b$ can be quantified by the following min-entropy function \cite{5208530}
\begin{equation}
     H_\infty(B|\mathcal{X},\mathcal{Y})=-\log_2\left[\max_{b,x,y}p(b|x,y)\right].
     \label{entropy}
\end{equation}
This entropy is considered to be `certifiable' if the corresponding probability distribution satisfies the constraint \autoref{limitofw}.\\
 Since our witnesses defined by \autoref{witnessdefinition} are linear in probabilities, we just need to consider pure states for our analysis as any arbitrary mixed state can be written as a convex combination of pure states \cite{PhysRevLett.105.230501}. It has also been proven that POVMs can be depicted as
a convex combinations of projective measurements in the case of two-measurement outcomes \cite{masanes2005extremal,Tomamichel2013}. 
Furthermore, its known that projective measurements on two dimensional systems can be represented as antipodal unit vectors on the Bloch sphere. In general, the basis elements can be expressed as
\begin{equation}
 M^0_y=\frac{1}{2}(\mathbb{I}+\overrightarrow{t}_y\cdot\sigma) ~~~~~~~~~M^1_y=\frac{1}{2}(\mathbb{I}-\overrightarrow{t}_y\cdot\sigma), \label{measvector}
 \end{equation} where $\overrightarrow{t}_y$ is a unit vector on the Bloch sphere and $\sigma=(\sigma_x,\sigma_y,\sigma_z)$, the Pauli matrices. For preparations, it is  enough to consider pure states represented using unit vectors $\overrightarrow{s}_x$ as
\begin{equation}
 \rho_x=\frac{1}{2}\left(\mathbb{I}+\overrightarrow{s}_x\cdot\sigma\right). \label{statevector}
 \end{equation}
 Under this representation, probability distribution $p(b|x,y)$ can be expressed as
 \begin{equation}
p(b|x,y)=\tr\left(\rho_x M^b_y\right)=\frac{1}{2}\left(1+\overrightarrow{s}_x\cdot \overrightarrow{t}_y\right).
\end{equation}

We may now proceed to prove some general results related to the capabilities of SDI-QRNGs using the definitions and notations introduced in this section.
\begin{figure}
     \centering
     \includegraphics[width=0.9\linewidth]{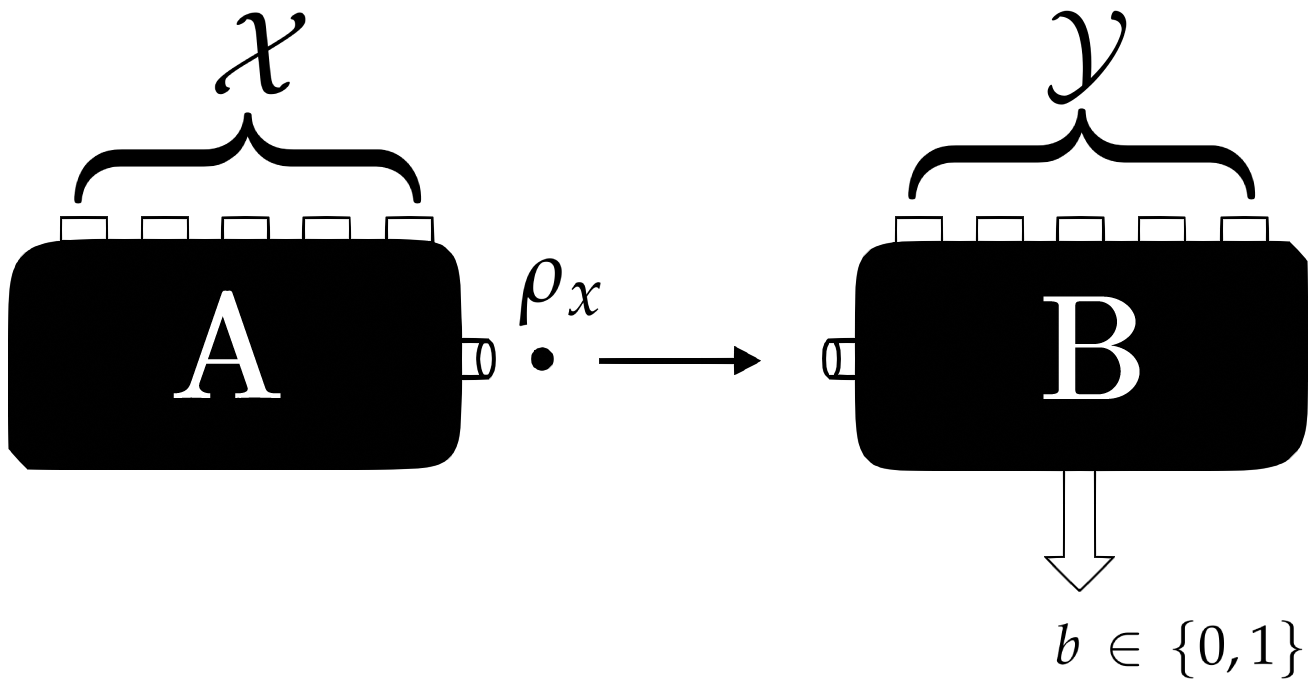}
     \caption{We consider the prepare and measure scenario of two-dimensional systems, $\rho_x$. Protocol features two black boxes $A$ and $B$, for preparation and measurement, respectively. $A$ has $\mathcal{X}$ input settings and $B$ has $\mathcal{Y}$ input settings. Based on the input, $A$ will output a state $\rho_x$ and $B$ will output $b\in\{0,1\}$ based on its input and the state sent by $A$.}
     \label{blackbox}
 \end{figure}
 \section{Results: Bounds on certifiable entropy}
 \label{sec:results}
 
 \begin{theorem}
  A prepare and measure protocol of two-dimensional systems and two-outcome measurements 
  can generate at most $-\log_2\left[\frac{1}{2}\left(1+\frac{1}{\sqrt{3}}\right)\right]$ bits of certifiable entropy.
 \label{theo1}
 \end{theorem}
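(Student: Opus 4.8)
The plan is to recast the min-entropy of \autoref{entropy} as a purely geometric quantity and then minimise it over all \emph{certifiable} configurations. Using \autoref{measvector}--\autoref{statevector}, for fixed $x,y$ we have $p(0|x,y)=\frac12(1+\overrightarrow{s}_x\cdot\overrightarrow{t}_y)$ and $p(1|x,y)=\frac12(1-\overrightarrow{s}_x\cdot\overrightarrow{t}_y)$, so that $\max_b p(b|x,y)=\frac12\left(1+|\overrightarrow{s}_x\cdot\overrightarrow{t}_y|\right)$. Writing $M\equiv\max_{x,y}|\overrightarrow{s}_x\cdot\overrightarrow{t}_y|$, the entropy becomes $H_\infty=-\log_2\!\left[\frac12(1+M)\right]$, a strictly decreasing function of $M$. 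Hence maximising the certifiable entropy is equivalent to minimising $M$ over all preparation/measurement vector arrangements whose statistics $\{E_{x,y}\}$ lie outside the classical region, i.e.\ satisfy \autoref{limitofw} for some witness of the form \autoref{witnessdefinition}.

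The next step is to establish the lower bound $M\ge 1/\sqrt3$ for every certifiable protocol. The origin of this value is the unit-norm constraint on Bloch vectors: if three measurement directions form an orthonormal triad $\{\hat e_1,\hat e_2,\hat e_3\}$, then for any preparation $\overrightarrow{s}_x$ one has $\sum_{i=1}^{3}(\overrightarrow{s}_x\cdot\hat e_i)^2=|\overrightarrow{s}_x|^2=1$, whence by averaging $\max_i|\overrightarrow{s}_x\cdot\hat e_i|\ge 1/\sqrt3$ and therefore $M\ge1/\sqrt3$. The real work is to show that \emph{no} certifiable configuration can do better. I would prove this as a contrapositive: assuming $|\overrightarrow{s}_x\cdot\overrightarrow{t}_y|<1/\sqrt3$ for all $x,y$, I would construct an explicit shared-randomness model with a two-valued classical message that reproduces every $E_{x,y}$, placing the point inside the classical polytope and hence forcing $W\le W_c$ for \emph{all} witnesses, contradicting \autoref{limitofw}.

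I expect this classical-simulation step to be the main obstacle. It requires characterising exactly which qubit prepare-and-measure statistics are classically reproducible under the bounded-dimension assumption, and showing that the simulability threshold is precisely the overlap $1/\sqrt3$; the orthonormal-triad computation supplies the right number, but the bound must be shown to hold for arbitrary numbers $\mathcal X,\mathcal Y$ of settings and arbitrary vector layouts, so an extremal-point or dimension-counting argument on the classical polytope is likely needed rather than the triad estimate alone.

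Finally I would certify tightness by exhibiting a configuration that saturates $M=1/\sqrt3$ while remaining non-classical. Taking the three measurement axes as an orthonormal triad $\hat e_1,\hat e_2,\hat e_3$ and the preparations along the cube-diagonal directions $\tfrac{1}{\sqrt3}(\pm1,\pm1,\pm1)$, every overlap $|\overrightarrow{s}_x\cdot\overrightarrow{t}_y|$ equals $1/\sqrt3$, so $M=1/\sqrt3$. Since several distinguishable preparations are now probed in three mutually incompatible bases, the resulting statistics cannot be matched by a single classical bit and violate the classical bound for an appropriate choice of coefficients $w_{x,y}$, confirming certifiability. Substituting $M=1/\sqrt3$ into $H_\infty=-\log_2[\frac12(1+M)]$ then yields exactly $-\log_2[\frac12(1+\frac{1}{\sqrt3})]$, establishing the bound as attainable and hence optimal.
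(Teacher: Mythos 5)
Your reduction in the first step is fine as far as it goes, and your tightness construction at the end is essentially the paper's explicit protocol (orthogonal measurement triad, tetrahedral preparations). But the step you yourself flag as ``the main obstacle'' --- that $\max_{x,y}|\overrightarrow{s}_x\cdot\overrightarrow{t}_y|<1/\sqrt{3}$ forces the statistics into the classical polytope --- is not just unproven, it is false. Take $\mathcal{X}=4$, $\mathcal{Y}=2$ with $\overrightarrow{t}_1=(1,0,0)$, $\overrightarrow{t}_2=(0,1,0)$ and preparations $\overrightarrow{s}=(\pm c,\pm c,\sqrt{1-2c^2})$ with $c=0.55<1/\sqrt{3}\approx0.577$. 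Every overlap has magnitude $0.55$, yet the $2\to1$ QRAC witness evaluates to $\frac12(1+0.55)=0.775$, strictly above its classical bound $3/4$, so the point lies outside the classical region and the correlations are certifiably non-classical. Hence $1/\sqrt{3}$ is not a classical-simulability threshold, and no contrapositive of the kind you describe can close the argument. The same example shows why identifying the certifiable entropy with $-\log_2\left[\frac12(1+M)\right]$ of the honest realization is too generous: that realization has $-\log_2(0.775)\approx0.368$ bits, above the theorem's bound, but an adversary can reproduce the same witness value with a far more peaked distribution, and it is that worst case which is certified.

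The paper's mechanism for $1/\sqrt{3}$ is different and does not pass through classicality at all. It lower-bounds the guessing probability by an average over measurement settings, $p_{lb}=\max_x\frac{1}{\mathcal{Y}}\sum_y\max_b p(b|x,y)\le\max_{b,x,y}p(b|x,y)$, evaluates this for the state aligned with the (sign-adjusted) mean of the measurement directions, obtaining $\frac12\bigl(1+|\overrightarrow{t}_1+\cdots+\overrightarrow{t}_{\mathcal{Y}}|/\mathcal{Y}\bigr)$, and then minimises over all measurement configurations via a combinatorial lemma: for $\mathcal{Y}$ unit vectors confined to one octant the sum of pairwise cosines is at least $\frac{3}{2}\mu(\mu-1)+r\mu$ with $\mathcal{Y}=3\mu+r$, the minimum being attained by distributing the vectors over three orthogonal axes. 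The constant $1/\sqrt{3}$ thus comes from the impossibility of making more than three directions in $\mathbb{R}^3$ mutually unbiased on average, not from a boundary of the classical set. Your orthonormal-triad identity $\sum_i(\overrightarrow{s}\cdot\hat e_i)^2=1$ is the right germ of this idea, but to make it a proof you would have to show, as the paper does, that \emph{every} measurement configuration (any $\mathcal{Y}$, any layout) admits a preparation whose \emph{average} overlap is at least $1/\sqrt{3}$ --- an averaging and packing statement, not a simulation statement.
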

 \begin{proof}
 
  Maximizing the entropy in \autoref{entropy} amounts to minimizing the quantity $\max_{b,x,y}p(b|x,y)$ over all prepare and measure protocols. We shall achieve this by first defining a lower bound for the quantity to be minimized and then deriving the lowest possible value for the lower bound.\\
 Consider the quantity $$p_{lb}=\max_x\frac{1}{\mathcal{Y}}\sum_{y}\max_b p(b|x,y)$$
 where the average is taken over all the measurement settings. Since mean over a set is lower than the maximum of a set, $p_{lb}$ forms a lower bound to $\max_{b,x,y} p(b|x,y)$.
 We may now derive an expression for $p_{lb}$.\\
 Let us represent the measurement basis for  $\mathcal{Y}$ measurement settings as $$\mathcal{T}_\mathcal{Y}=\{\{\overrightarrow{t}_1,-\overrightarrow{t}_1\},\{\overrightarrow{t}_2,-\overrightarrow{t}_2\},\cdots,\{\overrightarrow{t}_\mathcal{Y},-\overrightarrow{t}_\mathcal{Y}\}\}.$$
 Each of these measurement bases can be represented by a diameter of the Bloch sphere with the basis elements as its endpoints. For example the basis $\{\overrightarrow{t}_1,-\overrightarrow{t}_1\}$ represents a diameter with $\overrightarrow{t}_1$  and  $-\overrightarrow{t}_1$ as its endpoints. It is trivial to see that given any two non-perpendicular diameters of a sphere in ${\rm I\!R}^3$ the smallest angle between them would be less than or equal to $\pi/2$.  To further illustrate our point, let us consider the bases $\{\overrightarrow{t}_i,-\overrightarrow{t}_i\}$and$\{\overrightarrow{t}_j,-\overrightarrow{t}_j\}$. If the angle between the vectors $\overrightarrow{t}_i$ and $-\overrightarrow{t}_j$ is greater than $\pi/2$, then the angle between $\overrightarrow{t}_i$ and $\overrightarrow{t}_j$ will definitely be less than $\pi/2$; for any $i,j \in [\mathcal{Y}]$.\\
 Keeping the above arguments in mind, consider a particular case: $\mathcal{Y}=2$, with measurement bases as $\{\overrightarrow{t}_1,-\overrightarrow{t}_1\}$ and $\{\overrightarrow{t}_2,-\overrightarrow{t}_2\}$. If we consider the angle between $\overrightarrow{t}_1$ and $\overrightarrow{t}_2$ to be less than or equal to $\pi/2$, then the state $\rho_x$ with $\overrightarrow{s}_x=\frac{\overrightarrow{t}_1+\overrightarrow{t}_2}{|\overrightarrow{t}_1+\overrightarrow{t}_2|}$ maximizes $\frac{1}{\mathcal{Y}}\sum_{y}\max_b p(b|x,y)$, yielding $p_{lb}$. It is easy to see that $p_{lb}$ is minimum when $\overrightarrow{t}_1$ and $\overrightarrow{t}_2$ are perpendicular to each other. \\
 For now, let us assume that $0\leq\theta_{i,j}\leq\pi/2$, where $\theta_{i,j}$ is the angle between the measurement vectors $\overrightarrow{t}_i $ and $\overrightarrow{t}_j$ for $i,j \in [\mathcal{Y}]$. This is in general not true  for $\mathcal{Y}\geq3,$ but we will give an argument at the end of the proof as to why this assumption is valid enough to find out the minimum value of $p_{lb}$.\\
 Given this setup, consider $\rho_x$ with $\overrightarrow{s}_x=\frac{\overrightarrow{t}_1+\overrightarrow{t}_2+\cdots+\overrightarrow{t}_\mathcal{Y}}{|\overrightarrow{t}_1+\overrightarrow{t}_1+\cdots+\overrightarrow{t}_\mathcal{Y}|}$. Note that given the choice of measurement vectors $\{\overrightarrow{t}_1,\overrightarrow{t}_2,\cdots,\overrightarrow{t}_\mathcal{Y}\}$, this state maximizes $\frac{1}{\mathcal{Y}}\sum_{y}\max_b p(b|x,y)$ since it lies along the average direction of the measurement vectors. Simplification yields
\begin{equation}
p_{lb}=\frac{1}{2}\left(1+\frac{|\overrightarrow{t}_1+\overrightarrow{t}_2+\cdots+\overrightarrow{t}_\mathcal{Y}|}{\mathcal{Y}}\right).
\label{eqmod}
\end{equation}
We can represent \autoref{eqmod} as
 \begin{equation}
     p_{lb}=\frac{1}{2}\left(1+\frac{\sqrt{\mathcal{Y}+2(\cos{\theta_{1,2}}+\cdots+\cos{\theta_{\mathcal{Y}-1,\mathcal{Y}}}})}{\mathcal{Y}}\right).
     \label{plbang}
 \end{equation}
 \begin{lemma}
 For $\mathcal{Y}$  unit vectors that lie in an octant of a sphere in $\mathbb{R}^3$, the minimum of the sum of cosines of the angles formed between them is equal to $\frac{3}{2}\mu(\mu-1)+r\mu$ where $\mathcal{Y}= 3\mu+r$ for positive integers $\mu$ and $r\in\{0,1,2\}$.
 \label{sumofcos}
 \end{lemma}
 \begin{proof}
 Consider $\mathcal{Y}$ vectors $\{\overrightarrow{t}_1,\cdots,\overrightarrow{t}_\mathcal{Y}\}$. The sum to be minimized is
 $$
 \cos{\theta_{1,2}}+\cos{\theta_{1,3}}+\cdots+\cos{\theta_{\mathcal{Y}-1,\mathcal{Y}}}.
 $$
 We can rewrite it as
 $$
 \left(\cos{\theta_{1,2}}+\cos{\theta_{1,3}}+\cdots+\cos{\theta_{1,\mathcal{Y}}}\right)+\cdots+\cos{\theta_{\mathcal{Y}-1,\mathcal{Y}}}.
 $$
 The terms in the bracket is equal to
 $$
 \overrightarrow{t}_1\cdot\left(\overrightarrow{t}_2+\cdots+\overrightarrow{t}_\mathcal{Y}\right).
 $$
 Since every vector lies in the same octant, the dot product is minimized when $\overrightarrow{t}_1$ is along one of the axis. We can repeat the same process for every other vector until all of them lines up with one of the $3$ axes. We have three scenarios based on the value of $r\in\{0,1,2\}$; \\
\begin{itemize}
    \item $\mathcal{Y}=3\mu$: It is trivial to see that the sum is minimum when the vectors are equally distributed among the axes; $\mu$ vectors along each axis. The sum of cosines is equal to $3 \Mycomb[\mu]{2}$.
    \item $\mathcal{Y}=3\mu+1$: An additional vector along any one of the axes, say $x$ axis. The sum becomes $2 \Mycomb[\mu]{2}+\Mycomb[\mu+1]{2}$.
    \item $\mathcal{Y}=3\mu+2$: Consider the sum of vectors
    $$\overrightarrow{t}_1+\cdots+\overrightarrow{t}_{3\mu+1}.$$
    Since they have an arrangement dictated by the previous case, the vector $\overrightarrow{t}_{3\mu+2}$ should end up at $y$ or $z$ axis. The sum becomes $\Mycomb[\mu]{2}+2\Mycomb[\mu+1]{2}$.
\end{itemize}
Putting it all together, we have the sum as
$$
(3-r)\Mycomb[\mu]{2}+r\Mycomb[\mu+1]{2}.
$$
Simplifying it we obtain
$$
\frac{3}{2}\mu(\mu-1)+r\mu.
 $$
 \end{proof}
 Since our measurement vectors $\{\overrightarrow{t}_1,\overrightarrow{t}_2,\cdots,\overrightarrow{t}_\mathcal{Y}\}$ are atmost $\pi/2$ away from each other, we can consider them to lie in the same octant. Applying Lemma \ref{sumofcos}, \autoref{plbang} becomes,
\begin{equation}
p_{lb}=\frac{1}{2}\left(1+\frac{\sqrt{\mathcal{Y}+3\mu(\mu-1)+2r\mu}}{\mathcal{Y}}\right).
\label{plbsubs}
\end{equation}
Substituting $\mathcal{Y}=3\mu+r$ we obtain
 \begin{eqnarray}
   p_{lb}&=&\frac{1}{2}\left(1+\frac{\sqrt{3\mu^2+r(2\mu+1)}}{3\mu+r}\right).
 \end{eqnarray}
 Thus, for $r=0$, we have
 \begin{eqnarray}
 p_{lb}&=&\frac{1}{2}\left(1+\frac{\sqrt{3\mu^2}}{3\mu}\right)\nonumber\\
   &=&\frac{1}{2}\left(1+\frac{1}{\sqrt{3}}\right).
 \end{eqnarray}
 For $r\in\{1,2\}$, $p_{lb}>\frac{1}{2}\left(1+\frac{1}{\sqrt{3}}\right)$ and $p_{lb}\rightarrow\frac{1}{2}\left(1+\frac{1}{\sqrt{3}}\right)$ as $\mu\rightarrow \infty$.
 Thus in general,
 $$
 p_{lb}\geq\frac{1}{2}\left(1+\frac{1}{\sqrt{3}}\right).
 $$
 \end{proof}
Note that our assumption that $0\leq\theta_{i,j}\leq\pi/2$ for $i,j \in [\mathcal{Y}]$ is valid enough since the minimum value for $p_{lb}$ is obtained when the measurement vectors are along the $3$D axes. This implies that by induction, using $\mathcal{Y}=2$ as the initial step and the freedom to relabel any measurement basis, we can take any $\mathcal{Y}$ measurement vectors to lie in the same octant.
\begin{theorem}A prepare and measure protocol of two-dimensional systems needs at least $4$ preparation settings and $3$ measurement settings to generate the maximum amount of entropy. \label{theo2}
\end{theorem}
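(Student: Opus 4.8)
The plan is to split the statement into the measurement bound, which comes almost for free from Theorem~\ref{theo1}, and the preparation bound, which is the substantive part and which I would attack through a self-testing argument. For the measurement settings I would simply inspect the closed form $p_{lb}=\frac{1}{2}\bigl(1+\frac{\sqrt{3\mu^{2}+r(2\mu+1)}}{3\mu+r}\bigr)$ with $\mathcal{Y}=3\mu+r$ derived in the proof of Theorem~\ref{theo1}. Since $\max_{b,x,y}p(b|x,y)\ge p_{lb}$, the maximal entropy $-\log_2\!\bigl[\tfrac12(1+\tfrac1{\sqrt3})\bigr]$ is reachable only when $p_{lb}$ attains its global minimum $\tfrac12(1+\tfrac1{\sqrt3})$, i.e. only when $r=0$. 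Evaluating the formula at $\mathcal{Y}=2$ (so $\mu=0,\,r=2$) gives $p_{lb}=\tfrac12(1+\tfrac1{\sqrt2})>\tfrac12(1+\tfrac1{\sqrt3})$, so two settings strictly undershoot the bound, whereas $\mathcal{Y}=3$ ($\mu=1,\,r=0$, which by Lemma~\ref{sumofcos} forces three mutually orthogonal vectors) attains it. Hence at least three measurement settings are needed.

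For the preparation bound I would first fix the three orthogonal measurement directions along the coordinate axes and observe that $\max_{b,x,y}p\le\tfrac12(1+\tfrac1{\sqrt3})$ forces $\max_i|(\overrightarrow{s}_x)_i|\le\tfrac1{\sqrt3}$ for every preparation; together with $|\overrightarrow{s}_x|=1$ this is possible only if $|(\overrightarrow{s}_x)_i|=\tfrac1{\sqrt3}$ for all $i$, so every admissible $\overrightarrow{s}_x$ is one of the eight diagonal Bloch vectors $\tfrac1{\sqrt3}(\pm1,\pm1,\pm1)$. The key realisation is that merely \emph{achieving} these statistics is weaker than \emph{certifying} them: the certified entropy is the worst case of $\max_{b,x,y}p$ over all qubit realisations reproducing the observed witness value, so reaching the bound requires a witness whose maximal value is consistent only with three orthogonal measurements together with diagonal states, i.e. a witness that self-tests the optimal configuration. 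I would therefore recast the problem as asking for the fewest preparations that make such a self-test possible.

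The decisive step is to show that no witness built from three preparations can do this. Writing a witness as $\sum_{x,y}w_{x,y}E_{x,y}$ and using $E_{x,y}=\tfrac12(1+\overrightarrow{s}_x\cdot\overrightarrow{t}_y)$, its quantum optimum is attained with $\overrightarrow{s}_x\parallel\overrightarrow{r}_x$, where $\overrightarrow{r}_x=\sum_y w_{x,y}\overrightarrow{t}_y$, leaving $\tfrac12\sum_x|\overrightarrow{r}_x|$ to be maximised over the measurement vectors. I would then demonstrate that for three preparations this maximum sits at a degenerate configuration in which the $\overrightarrow{t}_y$ are not orthogonal: already for the canonical sign witness $w_{x,y}=\tau_{x,y}\in\{\pm1\}$ matched to three diagonal states, the choice $\overrightarrow{t}_1\perp\overrightarrow{t}_2=\overrightarrow{t}_3$ yields a strictly larger value of $\sum_x|\overrightarrow{r}_x|$ than the orthogonal choice, and the corresponding optimal $\overrightarrow{s}_x$ overlaps some $\overrightarrow{t}_y$ by more than $\tfrac1{\sqrt3}$, forcing $\max_{b,x,y}p>\tfrac12(1+\tfrac1{\sqrt3})$. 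Thus the witness-maximising realisation has sub-maximal entropy and the certified worst case cannot reach the bound with three preparations. For the matching construction I would exhibit the regular tetrahedron $\tfrac1{\sqrt3}\{(+,+,+),(+,-,-),(-,+,-),(-,-,+)\}$ together with the axis witness $\tau$ given by these same sign patterns, check that $\sum_{x,y}\tau_{x,y}=0$ and that $\sum_x|\overrightarrow{r}_x|$ is now genuinely maximised at orthogonal $\overrightarrow{t}_y$ with value $4\sqrt3$, so that $W_Q=2\sqrt3$ is self-tested by diagonal states; hence four preparations suffice. The main obstacle is the quantifier in the lower bound: one must exclude \emph{every} three-preparation witness, including those with general real $w_{x,y}$, so the real work is to reduce an arbitrary witness to the canonical sign form using antipodal and axis relabellings and then to prove that the degenerate measurement optimum persists for all admissible sign patterns.
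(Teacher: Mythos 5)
Your treatment of the measurement bound is essentially the paper's: read off the closed form of $p_{lb}$ from the proof of Theorem~\ref{theo1} and note that exact attainment of $\tfrac12\bigl(1+\tfrac1{\sqrt3}\bigr)$ forces $r=0$, hence $\mathcal{Y}\ge 3$. For the preparation bound you take a genuinely different route. The paper first rules out $\mathcal{X}=2$ (perfectly distinguishable states), then shows every $3\times3$ witness must have all nine coefficients nonzero (otherwise some state lies in the plane spanned by two measurement directions and $\max p\ge\tfrac12\bigl(1+\tfrac1{\sqrt2}\bigr)$), reduces the witness to a sign pattern, and then exhibits a single uniform planar realisation built from the $2\to1$ QRAC (square states in the equatorial plane, third bit decoded along $\tfrac1{\sqrt2}(1,1,0)$) whose average success probability is at least $\tfrac19\bigl(6\cdot\tfrac12(1+\tfrac1{\sqrt2})+2\bigr)\approx0.79125$; combined with $\max_{b,x,y}p(b|x,y)\ge\tfrac19\sum_{x,y}p(b=x_y|x,y)$ this caps the certifiable entropy strictly below the target for \emph{every} sign pattern at once, without ever locating the true optimum of the witness. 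Your route instead analyses the maximisation of $\sum_x|\overrightarrow{r}_x|$ directly and argues the optimum sits at a degenerate configuration $\overrightarrow{t}_2=\overrightarrow{t}_3$; your computation for the pattern $\{(+,+,+),(+,-,-),(-,+,-)\}$ (value $2\sqrt5+1>3\sqrt3$, overlap $2/\sqrt5>1/\sqrt3$) is correct, and your four-preparation tetrahedron check ($4\sqrt3$ at orthogonal measurements, i.e.\ $W_Q=2\sqrt3$) matches the paper.

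The gap is exactly the one you flag, and it is not peripheral: you have verified the degenerate-optimum claim for one sign pattern, while the quantified statement --- that for every admissible $3\times3$ sign pattern (and, before that, every real-coefficient witness) the quantum optimum is never attained solely at three orthogonal measurements with diagonal states --- is the entire content of the preparation bound and is left as ``the real work.'' The paper's planar $2\to1$ QRAC construction is precisely the device that closes this quantifier uniformly: it only needs to exhibit, for an arbitrary sign assignment, one realisation whose average success probability exceeds $\tfrac12\bigl(1+\tfrac1{\sqrt3}\bigr)$, which then lower-bounds $\max p$ in any realisation attaining the witness optimum. You would also need the paper's preliminary step excluding zero coefficients, since your canonical sign form is undefined for vanishing $w_{x,y}$ and a zero coefficient already forces $\max p\ge\tfrac12\bigl(1+\tfrac1{\sqrt2}\bigr)$. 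As written, the proposal is a plausible and partly verified programme, but the decisive universal step is still open rather than proved.
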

\begin{proof}
From \autoref{theo1}, the maximum entropy is generated when $\mathcal{Y}=3\mu$. For $\mu=1$, we have the minimum number of measurement settings, $\mathcal{Y}=3$. We will now try to minimise the number of preparation settings when $\mathcal{Y}=3$.\\
As for the number of preparation settings $\mathcal{X}$, note that it can't be $2$ since the states will be perfectly distinguishable; there is no entropy in the output. When $\mathcal{X}=3$ and $\mathcal{Y}=3$, a general dimension witness defined by \autoref{witnessdefinition} can be expressed as
\begin{eqnarray}
W&=&w_{1,1}E_{1,1}+w_{1,2}E_{1,2}+w_{1,3}E_{1,3}+w_{2,1}E_{2,1}+w_{2,2}E_{2,2}\nonumber\\
&&+w_{2,3}E_{2,3}+w_{3,1}E_{3,1}+w_{3,2}E_{3,2}+w_{3,3}E_{3,3},\label{witnessfor33}
\end{eqnarray}

From \autoref{theo1}, maximum entropy generation needs at least $3$ measurement settings. Since maximization is over the entire probability distribution, this holds for every preparation setting. Hence, none of the coefficients $w_{x,y}$ can be $0$ for a witness which acheives the maximum entropy. For example, suppose $w_{3,3}$ is $0$. This implies that the state $\rho_3$ depends only on the measurement bases $M_1$ and $M_2$; $\rho_3$ lies in the plane defined by $M_1$ and $M_2$. Subsequently, for a given witness value, $\max_{b,x,y}p(b|x,y)\geq\frac{1}{2}\left(1+\frac{1}{\sqrt{2}}\right)>\frac{1}{2}\left(1+\frac{1}{\sqrt{3}}\right)$.\\
Now that we have established that all coefficients in \autoref{witnessfor33} are non-zero, we can model it as an QRAC-like protocol where preparation states correspond to $3$ bit strings and measurement settings determine which bit to guess. Positive coefficients are mapped to bit $0$ and negative coefficients to bit $1$. For example, consider
\begin{eqnarray}
R_{3,3}&\equiv&E_{1,1}+E_{1,2}+E_{1,3}+E_{2,1}-E_{2,2}\nonumber\\
&&-E_{2,3}-E_{3,1}+E_{3,2}-E_{3,3}.
\label{witnessfor33example}
\end{eqnarray}
Based on our construction,  $R_{3,3}$ can be defined as the average success probability of an QRAC protocol where preparation states are represented as $x\in\{000,011,101\}$ and measurement settings dictate which one of the three bits to guess.

For any such  task we can construct a protocol based on the $2\rightarrow1$ QRAC (\cf \autoref{Bloch2}) whose average probability would be greater than $\frac{1}{2}(1+\frac{1}{\sqrt{3}})$ , implying the entropy generated will be lesser than $-\log_2\left[\frac{1}{2}(1+\frac{1}{\sqrt{3}})\right]$ (since $\max_{b,x,y}p(b|x,y)\geq \sum_{x,y}p(b=x_y|x,y)$: $x$ represents any of the $3$-bit strings and $x_y$ denotes the $y^{th}$ bit of that string).\\
The protocol is represented using \autoref{Bloch2}. A $3$ bit string can be written as $x$ where $x\in\{00x_3,01x_3,10x_3,11x_3\}$ and $x_3$ is the third bit (which could be different for different strings). The task is then straightforward; encode the strings on any three of the four possible states using $2\rightarrow1$ QRAC, where the encoding can be represented as
\\
\begin{align*}
\hspace{0.65cm}&\mbox{Encoding}\begin{cases}00\to\frac{1}{2}\left(\mathbb{I}+\frac{1}{\sqrt{2}}\sigma_x+\frac{1}{\sqrt{2}}\sigma_y\right)\\
01\to\frac{1}{2}\left(\mathbb{I}+\frac{1}{\sqrt{2}}\sigma_x-\frac{1}{\sqrt{2}}\sigma_y\right)\\
10\to\frac{1}{2}\left(\mathbb{I}-\frac{1}{\sqrt{2}}\sigma_x+\frac{1}{\sqrt{2}}\sigma_y\right)\\
11\to\frac{1}{2}\left(\mathbb{I}-\frac{1}{\sqrt{2}}\sigma_x-\frac{1}{\sqrt{2}}\sigma_y\right).
\end{cases}
\end{align*}
Decode the first two bits using the measurement bases given by
$$M_y\equiv\left\{\frac{1}{2}(\mathbb{I}+\overrightarrow{t}_y\cdot\sigma),\frac{1}{2}(\mathbb{I}-\overrightarrow{t}_y\cdot\sigma)\right\}$$
and their corresponding Bloch vectors
\begin{align*}
&\mbox{Decoding}\begin{cases}x_1\to \overrightarrow{t}_i\equiv(1,0,0)\\
x_2\to \overrightarrow{t}_i\equiv(0,1,0).
\end{cases}
\end{align*}
Decode the third bit using
$$
x_3\to \overrightarrow{t}_3\equiv\frac{1}{\sqrt{2}}(1,1,0).
$$
Given the choice of measurement bases and prepared states
the average probability is found to be at least,
$$
\frac{6\left(\frac{1}{2}(1+\frac{1}{\sqrt{2}})\right)+2}{9}\approx0.79125.
$$
Hence for such a protocol we have average probability greater than $\sim$ $0.79125$ which is greater than $\frac{1}{2}(1+\frac{1}{\sqrt{3}})\approx0.78867$.
\begin{figure}
    \centering
    \includegraphics[width=0.8\linewidth]{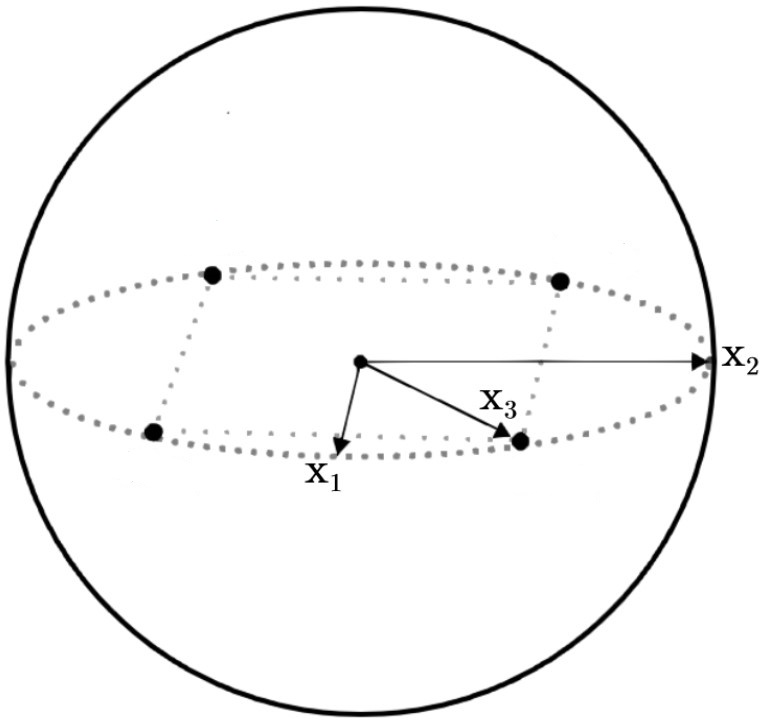}
    \caption{Bloch sphere diagram of a SDI protocol with $3$ preparation setting and $3$ measurement settings.  The arrows denote the 'up' direction of the measurement basis and the black dots indicate the encoded states for a particular choice of setting or equivalently, a string of bits. The $3$ strings are to be encoded in any of the four vertices. The encoded states of this protocol form a subset of the encoded states in the $2\rightarrow1$ QRAC, which forms a square on the equitorial plane of the Bloch sphere, denoted in this figure using dotted lines.}
    \label{Bloch2}
\end{figure}
\end{proof}

 \section{A specific protocol}
 \label{sec:protocol}
 An explicit protocol to achieve  $H_\infty=-\log_2[\frac{1}{2}(1+\frac{1}{\sqrt{3}})]\approx0.34249$ using $4$ preparation settings and $3$ measurement settings is given here.  It corresponds to an QRAC-like protocol in which the preparation party $A$ encodes one of the strings $x\in\{000,011,101,110\}$ to a single qubit and the measurement party $B$ attempts to decode $x_y\in\{x_1,x_2,x_3\}$
by suitable measurements. Using arguments similar to those used for proving  \autoref{theo2}, this protocol can be proven to be unique up to some relabelling of the measurement bases. A suitable dimension witness is provided by
\begin{multline}
 R_{4,3}\equiv E_{000,1}+E_{000,2}+E_{000,3}+E_{011,1}-\\
 E_{011,2}-E_{011,3}-E_{101,1}+E_{101,2}-\\
 E_{101,3}-E_{110,1}-E_{110,2}+E_{110,3},
 \label{witness}
\end{multline}
and whenever
 \begin{equation}
     3< R_{4,3}\leq2\sqrt{3},
     \label{witnessbounds}
 \end{equation}
 the protocol has no classical description. \autoref{witnessbounds} was derived from the results provided in \cite{PhysRevA.104.012420}. They have treated the protocol as a generalized version of the $2\rightarrow1$ QRAC protocol where $B$ attempts to decode, in addition to the bits encoded by $A$, the parity of the bits as well. The average success probability of this particular protocol have previously found applications in the SDI security of QKD protocols \cite{PhysRevA.84.010302}. The corresponding dimension witness has also been applied previously in self testing of POVMs \cite{doi:10.1126/sciadv.aaw6664,PhysRevA.100.030301} and in reduction of symmetric dimension witnesses \cite{PhysRevA.90.022322}. \\
In order to achieve the maximal quantum value $2\sqrt{3}$, we may
encode the bits using the states given by

\begin{align*} \label{32enc}
&\mbox{Encoding}\begin{cases}000 \mapsto\frac{1}{2}\left(\mathbb{I}+\frac{1}{\sqrt{3}}\sigma_x+\frac{1}{\sqrt{3}}\sigma_y+\frac{1}{\sqrt{3}}\sigma_z\right)\\
011 \mapsto\frac{1}{2}\left(\mathbb{I}+\frac{1}{\sqrt{3}}\sigma_x-\frac{1}{\sqrt{3}}\sigma_y-\frac{1}{\sqrt{3}}\sigma_z\right)\\
101 \mapsto\frac{1}{2}\left(\mathbb{I}-\frac{1}{\sqrt{3}}\sigma_x+\frac{1}{\sqrt{3}}\sigma_y-\frac{1}{\sqrt{3}}\sigma_z\right)\\
110 \mapsto\frac{1}{2}\left(\mathbb{I}-\frac{1}{\sqrt{3}}\sigma_x-\frac{1}{\sqrt{3}}\sigma_y+\frac{1}{\sqrt{3}}\sigma_z\right).\end{cases}\nonumber
\end{align*}
and decode the bits using the measurement bases given by
$$M_y\equiv\left\{\frac{1}{2}(\mathbb{I}+\overrightarrow{t}_y\cdot\sigma),\frac{1}{2}(\mathbb{I}-\overrightarrow{t}_y\cdot\sigma)\right\}$$
with the corresponding Bloch vectors
\begin{align*}
&\mbox{Decoding}\begin{cases} x_1\to \overrightarrow{t}_1\equiv(1,0,0)\\
x_2\to \overrightarrow{t}_2\equiv(0,1,0)\\
x_3\to \overrightarrow{t}_3\equiv(0,0,1).\end{cases}
\end{align*}

Note that a general two-dimensional witness for $4$ preparation settings and $3$ measurement settings may not be able to produce the maximum amount of randomness. For example, consider the well known dimension witness $I_4$ \cite{PhysRevLett.105.230501,Hendrych2012}, defined as
$$
\resizebox{\hsize}{!}{$I_{4} \equiv E_{1,1}+E_{1,2}+E_{1,3}+E_{2,1}+E_{2,2}-E_{2,3}+E_{3,1}-E_{3,2}-E_{4,1}$.}
$$
Since the choice of the $4^{th}$ state solely depends on the $1^{st}$ measurement basis, one can always take $E_{4,1}$ to be $0$. This implies that $p(b=1|4,1)=1$; no entropy is generated in this case. The choice of dimension witness is special in that regard and warrants further analysis. We will now derive an analytical bound on the min-entropy based on the value of the dimension witness. The analysis and methods used is similar to what have been done in \cite{PhysRevA.91.032305,Li2015}.\\
Using \autoref{measvector} and \autoref{statevector}, \autoref{witness} can be written as
 \begin{eqnarray}
 R_{4,3}&\equiv&E_{000,1}+E_{000,2}+E_{000,3}+E_{011,1}-E_{011,2}-E_{011,3}-\nonumber\\
&&E_{101,1}+E_{101,2}-E_{101,3}-E_{110,1}-E_{110,2}+E_{110,3}\nonumber\\
&=&\resizebox{.83\hsize}{!}{$\operatorname{\tr}\left[\rho_{000}\left(M_{1}^0+M_{2}^0+M_{3}^0\right)\right]+\operatorname{\tr}\left[\rho_{011}\left(M_{1}^0-M_{2}^0-M_{3}^0\right)\right]$}+\nonumber\\
&&\resizebox{.83\hsize}{!}{$\operatorname{\tr}\left[\rho_{101}\left(-M_{1}^0+M_{2}^0-M_{3}^0\right)\right]+\operatorname{\tr}\left[\rho_{110}\left(-M_{1}^0-M_{2}^0+M_{3}^0\right)\right]$}\nonumber\\
&=&\frac{1}{2}\biggl(\overrightarrow{s}_{000}\cdot\left(\overrightarrow{t}_1+\overrightarrow{t}_2+\overrightarrow{t}_3\right)\hspace{0.7em}+\nonumber\\&&\hspace{1.45em}\overrightarrow{s}_{011}\cdot\left(\overrightarrow{t}_1-\overrightarrow{t}_2-\overrightarrow{t}_3\right)\hspace{0.8em}+\nonumber\\
&&
\hspace{1.45em}\overrightarrow{s}_{101}\cdot\left(-\overrightarrow{t}_1+\overrightarrow{t}_2-\overrightarrow{t}_3\right)+\nonumber\\&&\hspace{1.45em}\overrightarrow{s}_{110}\cdot\left(-\overrightarrow{t}_1-\overrightarrow{t}_2+\overrightarrow{t}_3\right) \biggl) \nonumber\\
&\leq&\frac{1}{2}\biggl(\left|\overrightarrow{t}_1+\overrightarrow{t}_2+\overrightarrow{t_{3}  }\right|+\left|\overrightarrow{t}_1-\overrightarrow{t}_2-\overrightarrow{t}_3\right|+\nonumber\\
&&\hspace{1.65em}\left|\overrightarrow{t}_1-\overrightarrow{t}_2+\overrightarrow{t}_3\right|+\left|\overrightarrow{t}_1+\overrightarrow{t}_2-\overrightarrow{t}_3\right|\biggl)\nonumber\\
&\leq& \frac{1}{2}\biggl(\sqrt{3+2\left(\cos \left(\theta_{1,2}\right)+\cos \left(\theta_{1,3}\right)+\cos \left(\theta_{2,3}\right)\right)}+\nonumber\\
&&\hspace{1.67em}\sqrt{3+2\left(\cos \left(\theta_{1,2}\right)-\cos \left(\theta_{1,3}\right)+\cos \left(\theta_{2,3}\right)\right)}+\nonumber\\
&&\hspace{1.67em}\sqrt{3+2\left(\cos \left(\theta_{1,2}\right)+\cos \left(\theta_{1,3}\right)-\cos \left(\theta_{2,3}\right)\right)}+\nonumber\\
&&\hspace{1.67em}\sqrt{3+2\left(\cos \left(\theta_{1,2}\right)-\cos \left(\theta_{1,3}\right)-\cos \left(\theta_{2,3}\right)\right)}\biggl),\nonumber\\&&\label{Wlong}
\end{eqnarray}
 where the first inequality follows from $|\overrightarrow{s}_x|<1$. Using \autoref{plbang} we can write $p_{lb}$ for our example as
\begin{equation}
    p_{lb}=\frac{1}{2}\resizebox{0.86\hsize}{!}{$ \left(1+\frac{\sqrt{3+2\left(\cos \left(\theta_{1,2}\right)+\cos \left(\theta_{1,3}\right)+\cos \left(\theta_{2,3}\right)\right)}}{3}\right)$}.
     \label{plbfor3}
\end{equation}
 
 In order to obtain a bounded value of $R_{4,3}$ as a function of $p_{lb}$ we will use the extreme value problem of multi-variable function. Changing variables as
 $$
 \mathcal{P}=\frac{\left(3\left(2p_{lb}-1\right)\right)^2-3}{2}~~~~~a=\cos{\theta_{1,3}}~~~~~~b=\cos{\theta_{2,3}}
 $$
 and applying it to \autoref{Wlong}, we obtain
\begin{multline}
    R_{4,3}\leq\max_{\{(q,r)\}}\biggl\{\frac{1}{2}\biggl(\sqrt{3+2\mathcal{P}}+\sqrt{3+2\left(2q-\mathcal{P}\right)}+\\
    \sqrt{3+2\left(2r-\mathcal{P}\right)}+
    \sqrt{3+2\left(\mathcal{P}-\left(2q+2r\right)\right)}\biggl)\biggl\},
\end{multline}
where $\left(q,r\right)$ is  one of the real roots of equation set with variables $\left(a,b\right)$ given by,
\begin{eqnarray}
\frac{1}{2} \left(\frac{2}{\sqrt{2 (2 a-\mathcal{P})+3}}-\frac{2}{\sqrt{2 (\mathcal{P}-2 a-2 b)+3}}\right)=0,\nonumber\\
\frac{1}{2} \left(\frac{2}{\sqrt{2 (2 b-\mathcal{P})+3}}-\frac{2}{\sqrt{2 (\mathcal{P}-2 b-2 a)+3}}\right)=0.\nonumber\\
\label{dereq0}
\end{eqnarray}
The equation set provided above is obtained by taking the derivatives of \autoref{Wlong} with respect to $a$ and $ b$.
 It turns out that the solutions of \autoref{dereq0} should satisfy the condition
\begin{equation}
 a=b=\mathcal{P}/3
 \implies \cos{\theta_{1,2}}=\cos{\theta_{1,3}}=\cos{\theta_{2,3}}.
 \label{allangeq}
\end{equation}
\begin{figure}
\centering
\includegraphics[width=0.8\linewidth]{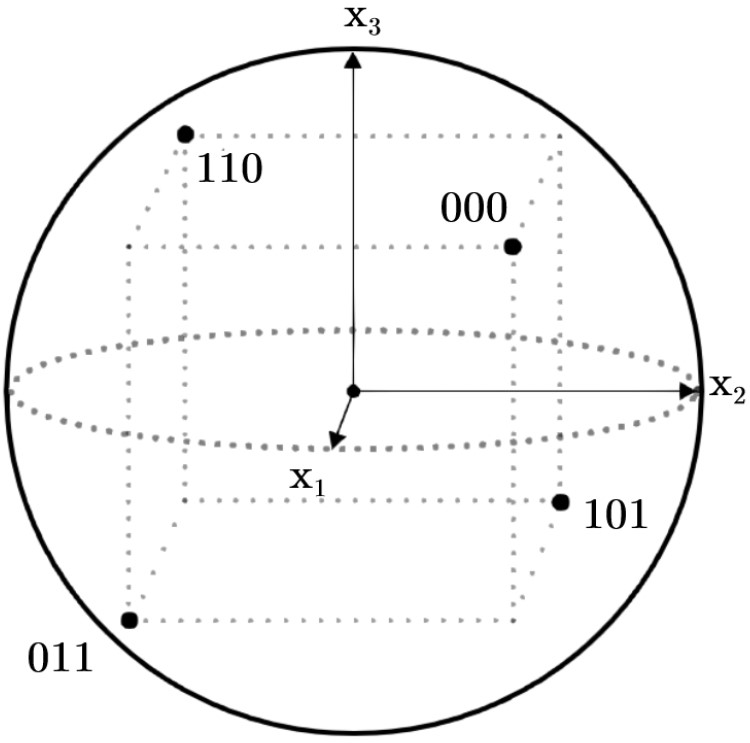}
\caption{Bloch sphere diagram of a SDI protocol with $4$ preparation setting and $3$ measurement settings. The arrows denote the 'up' direction of the measurement basis and the black dots indicate the encoded states for a particular choice of string/setting. The encoded states form a tetrahedron inside the Bloch sphere. They also form a subset of the encoded states in the $3\rightarrow1$ QRAC, which forms a cube, denoted in this figure using dotted lines.}
\label{Bloch3}
\end{figure}

Since $p_{lb}$ is defined as $\max_x\frac{1}{\mathcal{Y}}\sum_{y}\max_b p(b|x,y)$, by \autoref{allangeq} all the terms in the summation are equal. This means that when $R_{4,3}$ is maximized, $p_{lb}$ is equivalent to  $\max_{b,x,y}p(b|x,y)$. Hence the maximization will yield a tight upper bound on the randomness generated by this protocol. Also, \autoref{allangeq} reduces our problem to a single variable one. \\
Substituting \autoref{allangeq} in \autoref{Wlong} and \autoref{plbfor3}, we get
 \begin{eqnarray}
R_{4,3}&\leq&\frac{1}{2} \left(3 \sqrt{3-2 a}+\sqrt{6 a+3}\right),\nonumber\\
p_{lb}&=&\frac{1}{2} \left(\frac{1}{3} \sqrt{6 a+3}+1\right). 
\label{boundforplb}
\end{eqnarray}
 Solving \autoref{boundforplb} we obtain
 \begin{equation}
     p_{lb}\leq\frac{1}{12} \left(R_{4,3}+6 +\sqrt{3(12-R_{4,3}^2)}\right).
     \label{plbwitness}
 \end{equation}
\begin{figure}
    \centering
    \includegraphics[width=\linewidth]{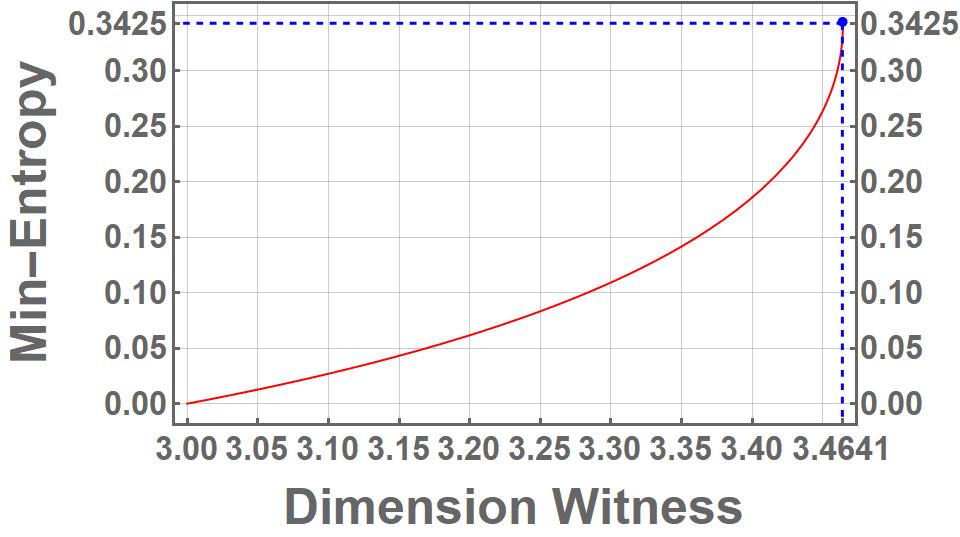}
    \caption{(Color online) Relationship between dimension witness value and upper bound on the entropy created by the protocol discussed in Section \ref{sec:protocol}.}
    \label{witvsentr}
\end{figure}
This forms a min-entropy bound for the particular protocol as shown in \autoref{witvsentr}. Since the choice of angles is unique when $R_{4,3}=2\sqrt{3}$ \ie,
$$
\theta_{1,2}=\theta_{1,3}=\theta_{2,3}=\pi/2,
$$
it yields the maximum amount of  certifiable randomness; $H_\infty\approx0.34249$. Also  note  that  since $p_{lb}$ forms an upper bound to the average success probability of the protocol, \autoref{plbwitness} implies that certifiable randomness can be generated as soon as one violates the classical bound on witness. This is particularly relevant in practical setups, which might not be able to achieve the maximum possible quantum violation. The protocols is thus noise-robust, and  has immediate applications in practical SDI-QRNG setups.\\
Since we assume that devices are shielded from the outside world, the randomness used to choose the input settings in each round can be used for other purposes. Hence the total output randomness from each round is more than what is being used to start the process. In order to increase randomness expansion even further, one can consider using a fixed subset of input setting for randomness generation for most rounds and a randomly chosen input setting for rest of the rounds \cite{Mironowicz2016,PhysRevA.91.032305,Bancal2014}. If the number of rounds is large enough, one can use the subset of rounds wherein the input setting where randomly chosen in order to estimate the witness value \cite{Pironio2010}. 
\section{Discussions and outlook}
\label{sec:discussion}
A tight bound on the entropy generation rate is derived for SDI prepare and measure protocols for two-dimensional systems and two-outcome measurements solely from geometrical arguments. The maximum entropy generated from such a class of protocols is found to be equal to $-\log_2\left[\frac{1}{2}\left(1+\frac{1}{\sqrt{3}}\right)\right]$. Here it will be apt to note the results of a previous work \cite{PhysRevA.99.052338} which suggests an  upper bound on the certifiable randomness from a quantum black box as $-\log_2\left[\min\{l,k+1\}\right]$, where $l$ is the number of outputs for a measurement ($2$ in our case) and $k$ is the number of preparation settings. For the particular class of protocols that we are considering, this result forms a trivial bound of $1$ bit of certifiable entropy. Our results are much more strict in that regard. It was also conjectured in \cite{PhysRevA.85.052308} that $3\rightarrow1$ QRAC generates the maximum amount of randomness among $n\rightarrow1$ QRAC protocols. We have proved that this is indeed the case. Our results are more general than QRAC protocols and also independent of any dimension witness. We have also provided an explicit protocol generating the maximum amount of entropy while having the least amount of input settings. The protocol generates as much entropy as $3\rightarrow1$ QRAC protocol does, however it requires lesser input settings. Note that even though the protocol generates maximum entropy when $W=W_Q$, it still remains an open question if one can extract more randomness than what is given in \autoref{witvsentr} when $W<W_Q$. Since \autoref{plbwitness} is tied to a specific dimension witness \ie, $R_{4,3}$, it would be worthwhile to investigate whether the methods by Wang \etal \cite{PhysRevA.92.052321} would be able to extract more randomness when $3<R_{4,3}<2\sqrt{3}$. Inspired by the DI approach used in \cite{Bancal2014,NietoSilleras2014}, they used the full observed statistics to certify randomness rather than restricting to a particular inequality.\\
 The results reported here are not only new and useful, it also opens up possibilities for a set of interesting investigations. An immediate generalisation of the presented work would be to consider the limits on entropy generation for $l$-outcome measurements on $d$-dimensional systems; $l,d>2$. It would also be interesting to investigate the randomness expansion capabilities of such protocols with partially free random sources as the input seed \cite{Colbeck2012,PhysRevA.92.022331}. Given the advantage of our protocol over the $3\rightarrow1$ QRAC with perfect random sources, it would be interesting to see a comparison with partially free sources \cite{PhysRevA.94.032318}. Another possible avenue for research would be to consider the randomness generation ability for multiple users as discussed in \cite{Wang2021}.

\section*{Acknowledgment}
Authors acknowledge the support from the QUEST scheme of Interdisciplinary Cyber Physical Systems (ICPS) program of the Department of Science and Technology (DST), India (Grant No.: DST/ICPS/QuST/Theme-1/2019/14 (Q80)). They also thank Manik Banik, Ram Krishna Patra, Sandeep Mishra, R Srikanth, H. S. Karthik, H. Akshata Shenoy and Kishore Thapliyal for their
interest and feedback on the work.

\bibliographystyle{apsrev4-2}
\bibliography{article-test.bib}

\end{document}